\newtheorem{proposition}{Proposition}
\newtheorem{theorem}{Theorem}
\theoremstyle{definition}
\newtheorem{example}{Example}
\definecolor{darkgreen}{rgb}{0,0.4,0.3}
\newcommand{\real}{\mathbb R} %real
\newcommand{\half}{\frac{1}{2}} %half
\newcommand{\mo}[1]{\left| #1 \right|} 
\newcommand{\ca}[1]{\mathcal{#1}} %calligraphic
\newcommand{\abs}{\mo} %modulus
\newcommand{\ip}[2]{\left\langle\,#1\,|\,#2\,\right\rangle} %inner product
\newcommand{\kb}[2]{|#1\rangle\langle#2|} %ketbra
\newcommand{\no}[1]{\left\|#1\right\|} %norm
\newcommand{\tr}[1]{{\rm tr}\left[#1\right]} %trace
\newcommand{\id}{\mathbbm{1}} %identity operator
\newcommand{\va}{\vec{a}} % a
\newcommand{\vb}{\vec{b}} % b
\newcommand{\vsigma}{\vec{\sigma}}% sigma
\newcommand{\M}{\mathsf{M}}%generic observable
\newcommand{\N}{\mathsf{N}}%generic observable
\newcommand{\G}{\mathsf{G}}%generic joint observable
\renewcommand{\P}{\mathsf{P}}%generic observable
\newcommand{\Q}{\mathsf{Q}}%generic observable
\newcommand{\en}{\mathcal{E}} %ensemble
\newcommand{\Pqrac}{P_{{\rm qrac}}}
\newcommand{\Prac}{P_{{\rm rac}}}
\newcommand{\Pqracopt}{\bar{P}_{{\rm qrac}}}
\newcommand{\phii}{\varphi}
\begin{document}

\title[]{Quantum random access codes and incompatibility of measurements}

\author{Claudio Carmeli}
\email{claudio.carmeli@gmail.com}
\affiliation{DIME, Universit\`a di Genova, Via Magliotto 2, I-17100 Savona, Italy}

\author{Teiko Heinosaari}
\email{teiko.heinosaari@utu.fi}
\affiliation{QTF Centre of Excellence, Turku Centre for Quantum Physics, Department of Physics and Astronomy, University of Turku, FI-20014 Turku, Finland}

\author{Alessandro Toigo}
\email{alessandro.toigo@polimi.it}
\affiliation{Dipartimento di Matematica, Politecnico di Milano, Piazza Leonardo da Vinci 32, I-20133 Milano, Italy}
\affiliation{I.N.F.N., Sezione di Milano, Via Celoria 16, I-20133 Milano, Italy}

\begin{abstract}
We prove that quantum random access code (QRAC) performs better than its classical counterpart only when incompatible quantum measurements are used in the decoding task. As a consequence, evaluating the average success probability for QRAC provides a semi-device independent test for the detection of quantum incompatibility. We further demonstrate that any incompatible pair of projective measurements gives an advantage over all classical strategies. Finally, we establish a connection between the maximal average success probability for QRAC and earlier quantities introduced to assess incompatibility.
\end{abstract}

\maketitle

%%%%%%%%%%%%%%%%%%%%%%
\section{Introduction}\label{sec:intro}
%%%%%%%%%%%%%%%%%%%%%%

Random access codes (RACs) are an important type of communication tasks where a sender (Alice) encodes a string of bits into a single bit and a receiver (Bob) aims to recover some randomly chosen subset of the data.
It has been shown that the probability of Bob to access the randomly selected part of the information can be increased if Alice sends qubits instead of bits \cite{Wi83,AmNaTSVa02,Ga02,HaIwNiRaYa06}, or more generally, qudits instead of dits \cite{TaHaMaBo15}. 
These scenarios are called quantum random access codes (QRACs) and they have been investigated from various different angles in several recent works \cite{AgBoMiPa18,CzSaTaPa18,FaKa19,MiPa19}.

In this work we are interested in the question what quantum resources are needed in order to achieve an average success probability in QRAC over the classical bound.
We show that incompatibility of measurements is a necessary ingredient to enable QRAC protocol to go over the classical bound in some given dimension. 
We further demonstrate that it is not generally sufficient, but we identify some important classes of measurements where it is.
In particular, we show that a pair of sharp measurements enable to go over the classical bound if and only if they are incompatible.

We present a generalization of QRAC and we demonstrate that this can be used to detect incompatibility also in the cases that do not fall into the realm of the usual QRAC. 
Namely, we consider a prepare-measure scenario of the same type as QRAC, but where the number of outcomes of measurements can be different than the Hilbert space dimension of the communication medium.
We derive an upper bound for the success probability that is satisfied by all compatible pairs of measurements but that can be exceeded by some incompatible measurements.
This scenario thus gives a semi-device-independent detection of quantum incompatibility.

%%%%%%%%%%%%%%%%%%%%%%
\section{QRAC}
%%%%%%%%%%%%%%%%%%%%%%

Quantum random access code (QRAC) is a specific kind of communication task. 
In the $(n,d)$-QRAC scheme Alice receives $n$ dits, $\vec{x}=(x_1,\ldots,x_n)$.
She can send one qudit to Bob. 
In addition, Bob receives a number $j\in\{1,\ldots,n\}$ and
his task is to guess the corresponding dit $x_j$. 
He does this by performing a measurement, depending on $j$, thereby obtaining an outcome $y$.
The enconding-decoding is successful if $y=x_j$.
The total success probability is usually taken to be either the worst case success probability or the average success probability, and when calculating these numbers it is assumed that the inputs for both Alice and Bob are uniformly distributed.

The strategy of Alice and Bob consist of $d^n$ quantum states for encoding and $n$ $d$-outcome measurements for decoding, all defined for a $d$-level quantum system.
We denote by $\en$ the encoding map and $\M_1,\dots,\M_n$ the measurements.
Hence, $\en(\vec{x})$ is a quantum state for each $\vec{x}$, and $\M_1,\dots,\M_n$ are $d$-outcome positive operator valued measures (POVMs).
The average success probability is then given as
\begin{align}\label{eq:average}
\frac{1}{nd^n} \sum_{\vec{x}} \tr{\en(\vec{x}) (\M_1(x_1) + \cdots + \M_n(x_n))} \, .
\end{align}
We denote by $\Pqrac^{n,d}$ the best achievable average success probability in $(n,d)$-QRAC.
In the case $n=2$, it is known from \cite{TaHaMaBo15} that $\Pqrac^{2,d} = \half (1+1/\sqrt{d})$.

The most interesting fact about QRAC is revealed when it is compared to its classical counterpart.
The rules in the classical random access code (RAC) are otherwise the same as described earlier, but now Alice is allowed to send a dit to Bob instead of a qudit.
We denote by $\Prac^{n,d}$ the optimal average success probability in $(n,d)$-RAC.
It is known \cite{TaHaMaBo15,AmKrAs15} that $\Prac^{2,d} = \half (1+\tfrac{1}{d})$.
In particular, as $\Pqrac^{2,d} > \Prac^{2,d}$, a suitably chosen quantum strategy can be better than any classical strategy.

Let us consider a strategy for $(n,d)$-QRAC where decoding is done by measurements $\M_1,\ldots,\M_n$.
As noted in \cite{FaKa19}, from \eqref{eq:average} we see that for a fixed collection of measurements, the average success probability is maximized when $\en(\vec{x})$ is an eigenstate corresponding to the maximal eigenvalue of the operator $\M_1(x_1)+\cdots +\M_n(x_n)$.
We write $\Pqracopt(\M_1,\ldots,\M_n)$ for the average success probability of a given collection of measurements when the states are optimized, and hence conclude that
\begin{align}\label{eq:pqracopt}
\Pqracopt(\M_1,\ldots,\M_n) = \frac{1}{nd^n} \sum_{\vec{x}} \no{\M_1(x_1) + \cdots + \M_n(x_n)} \, ,
\end{align}
where $\no{\cdot}$ is the operator norm.
We say that quantum measurements $\M_1,\ldots,\M_n$ are useful for $(n,d)$-QRAC if they enable to go over the classical bound, i.e., $\Pqracopt(\M_1,\ldots,\M_n)>\Prac^{n,d}$.

From now on, we concentrate on $(2,d)$-QRAC.
From the previous discussion, it follows that two measurements $\M_1$ and $\M_2$ are useful for $(2,d)$-QRAC if and only if
\begin{align}\label{eq:useful}
\sum_{x,y} \no{\M_1(x) + \M_2(y)} >  d(d+1) \, .
\end{align}
From \eqref{eq:useful}, we can verify the rather obvious but important fact that not all pairs of quantum measurements are useful for QRAC.
For instance, let us consider the choice $\M_1=\M_2=\M$. 
We get
\begin{equation}\label{eq:long}
\begin{aligned}
& \sum_{x,y}  \no{\M(x) + \M(y)} = 2\sum_x \no{\M(x)} \\
& \quad + \sum_{\substack{x,y \\ x\neq y}}  \no{\M(x) + \M(y)} \leq \sum_x  2 \no{\M(x)} + \sum_{\substack{x,y \\ x\neq y}}  \no{\id} \\
& \ \leq 2d + d(d-1) = d(d+1) \, , 
\end{aligned}
\end{equation}
which means that $\Pqracopt(\M,\M)\leq \Prac^{2,d}$.
Let us note that even this foolish quantum strategy (i.e., choosing $\M_1=\M_2$) reaches the optimal classical success probability under the condition that $\no{\M(x)}=1$ for all $x$.
This is the case when $\M$ is \emph{sharp}, i.e., $\M(x)=\kb{\psi_x}{\psi_x}$ for an orthonormal basis $\{\psi_x\}_{x=1}^d$.

%%%%%%%%%%%%%%%%%%%%%%
\section{Necessity of incompatibility}
%%%%%%%%%%%%%%%%%%%%%%

We recall that two measurements $\M_1$ and $\M_2$ are called \emph{compatible} if there is a third measurement $\G$, defined on their product outcome set, such that
\begin{align*}
\sum_y \G(x,y)=\M_1(x) \, , \quad \sum_x \G(x,y)=\M_2(y) \, .
\end{align*}
Otherwise $\M_1$ and $\M_2$ are \emph{incompatible}.
Incompatibility is a genuine quantum property that makes quantum measurements different from classical measurements \cite{HeMiZi16}.
Therefore, one can expect that incompatibility is related to the usefulness of quantum measurements for QRAC.
In fact, we have the following result.

\begin{theorem}\label{prop:compatible}
A compatible pair of $d$-outcome measurements is not useful for $(2,d)-QRAC$, i.e., 
compatible measurements $\M_1$ and $\M_2$ satisfy
\begin{align}
\Pqracopt(\M_1,\M_2) \leq \Prac^{2,d} \, .
\end{align}
\end{theorem}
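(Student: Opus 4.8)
The plan is to establish the operator inequality that is equivalent, via \eqref{eq:useful}, to the claim: for a compatible pair I must show $\sum_{x,y} \no{\M_1(x) + \M_2(y)} \leq d(d+1)$. Since $\Pqracopt(\M_1,\M_2)$ is monotone in this sum, proving the bound immediately gives $\Pqracopt(\M_1,\M_2)\leq\Prac^{2,d}$. The whole argument will hinge on the joint POVM $\G$ that witnesses compatibility, which I will use to rewrite each summand $\M_1(x)+\M_2(y)$ in a form whose operator norm is easy to control.

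First I would use the two marginal relations to re-express $\M_1(x)+\M_2(y)$ in terms of $\G$. Writing $\M_1(x)=\sum_{y'}\G(x,y')$ and $\M_2(y)=\sum_{x'}\G(x',y)$ and performing an inclusion--exclusion on the index pairs, I get $\M_1(x)+\M_2(y)=R_{xy}+\G(x,y)$, where $R_{xy}$ denotes the sum of all $\G(x',y')$ with either $x'=x$ or $y'=y$. The crucial observation is that $R_{xy}$ is itself a partial sum of the full POVM $\G$; hence $\id-R_{xy}=\sum_{x'\neq x,\,y'\neq y}\G(x',y')\geq 0$, so that $0\leq R_{xy}\leq\id$ and in particular $\no{R_{xy}}\leq 1$.

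Next, by the triangle inequality for the operator norm together with positivity of $\G(x,y)$, I obtain $\no{\M_1(x)+\M_2(y)}\leq \no{R_{xy}}+\no{\G(x,y)}\leq 1+\tr{\G(x,y)}$, where I used $\no{A}\leq\tr{A}$ for positive $A$. Summing over the $d^2$ pairs $(x,y)$ and invoking normalization $\sum_{x,y}\G(x,y)=\id$, the trace terms collapse to $\tr{\id}=d$, yielding $\sum_{x,y}\no{\M_1(x)+\M_2(y)}\leq d^2+d=d(d+1)$, which is exactly the negation of the usefulness condition \eqref{eq:useful} and hence the theorem.

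The step that carries the real content is the decomposition $\M_1(x)+\M_2(y)=R_{xy}+\G(x,y)$ with $R_{xy}\leq\id$: once this ``extraction of the diagonal term $\G(x,y)$'' from the two overlapping marginals is in place, everything else is a one-line estimate. The remaining ingredients --- the triangle inequality, the bound $\no{A}\leq\tr{A}$ for positive operators, and the normalization of $\G$ --- are entirely routine. It is worth emphasizing that compatibility enters only through the mere existence of $\G$; no finer structure of the joint measurement is required.
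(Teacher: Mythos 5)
Your proof is correct and takes essentially the same route as the paper's: your operator $R_{xy}$ is precisely the partial sum $\sum_{a\neq y}\G(x,a)+\sum_b\G(b,y)$ that the paper isolates after extracting one copy of $\G(x,y)$, and your two estimates --- $\no{R_{xy}}\leq 1$ via positivity of the complementary partial sum, and $\sum_{x,y}\no{\G(x,y)}\leq\tr{\id}=d$ --- are exactly the paper's. No gaps; the argument is complete as written.
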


\begin{proof}
Let us assume that $\M_1$ and $\M_2$ are compatible $d$-outcome measurements, hence having a joint measurement $\G$ with $d^2$ outcomes.
We have
\begin{align*}
&2d^2 \cdot \Pqracopt(\M_1,\M_2) = \\
&= \sum_{x,y}\bigg\|\sum_a  \G (x,a) + \sum_b \G(b,y)\bigg\| \\
& = \sum_{x,y} \Bigg\|\G(x,y)+ \sum_{\substack{a\\ a \neq y}}  \G (x,a) + \sum_{b} \G(b,y)\Bigg\| \\
& \leq \sum_{x,y} \no{\G(x,y)}+ \sum_{x,y}\Bigg\|\sum_{\substack{a\\ a \neq y}}  \G (x,a) + \sum_{b} \G(b,y)\Bigg\| \, .
\end{align*}
We estimate these two terms separately. 
For the first summand we have
\begin{align*}
& \sum_{x,y} \no{ \G(x,y)} \leq \sum_{x,y} \tr{\G(x,y)} = \tr{\id} = d\,.
\end{align*}
For the latter summand, we observe that
\begin{equation*}
0\leq \sum_{\substack{a\\ a \neq y}}  \G (x,a) + \sum_{b} \G(b,y) \leq \sum_{a,b} \G(a,b) =\id
\end{equation*}
and therefore
\begin{align*}
\sum_{x,y} \Bigg\|\sum_{\substack{a\\ a \neq y}}  \G (x,a) + \sum_{b} \G(b,y)\Bigg\| \leq d^2 \, . 
\end{align*}
Putting these together we get
\begin{align*}
\Pqracopt(\M_1,\M_2) \leq \frac{1}{2d^2} (d+d^2)= \frac12 \left(1 + \frac{1}{d} \right) \, .
\end{align*}
\end{proof}

Due to our earlier observation that $\Pqrac(\M,\M)=\Prac^{2,d}$ for certain $d$-outcome measurements $\M$, we conclude that the upper bound given in Thm.~\ref{prop:compatible} is tight.
The obvious question is: does the converse of Thm.~\ref{prop:compatible} hold, namely, are all pairs of incompatible measurements useful for QRAC?
In the following we see that this is not the case, but it holds under some additional assumptions.

%%%%%%%%%%%%%%%%%%%%%%
\section{Sharp measurements}
%%%%%%%%%%%%%%%%%%%%%%

A sharp $d$-outcome measurement $\M$ is defined as $\M(x)=\kb{\psi_x}{\psi_x}$ for an orthonormal basis $\{\psi_x\}_{x=1}^d$.
It is known that a sharp measurement is compatible with another measurement if and only if they are commuting \cite{HeReSt08}.
It follows that two sharp $d$-outcome measurements $\M_1$ and $\M_2$ are compatible if and only if there is a permutation $\sigma$ of the set $\{1,\ldots,d\}$ such that $\M_2(y) = \M_1(\sigma(y))$ for all $y$.
In this case $\M_1$ and $\M_2$ are, essentially, describing the same measurement but they differ by relabeling of the outcomes. 
It can be verified as in \eqref{eq:long} that $\Pqracopt(\M_1,\M_2)=\Pqracopt(\M_1,\M_1)=\Prac^{2,d}$. 

Two sharp $d$-outcome measurements that are not related by a permutation of outcomes are incompatible and hence satisfy the necessary criterion for being useful for QRAC. 
The immediate question is: is an arbitrary pair of incompatible sharp measurements useful for QRAC? 
This has been conjectured to be so and called Homer's conjecture \cite{AmLeMaOz08}.
In the following proposition we prove that the conjecture is, indeed, true in $(2,d)$-QRAC.

\begin{proposition}
Let $\M_1$ and $\M_2$ be two sharp $d$-outcome measurements. 
Then
\begin{align}
\Pqracopt(\M_1,\M_2) \geq \Prac^{2,d} \, ,
\end{align}
with equality attained if and only if $\M_1$ and $\M_2$ are compatible (which is the case if they are the same up to permutation of outcomes). 
\end{proposition}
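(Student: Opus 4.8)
The plan is to reduce the statement to a clean inequality about unitary matrices and then invoke the elementary comparison between the $\ell^1$- and $\ell^2$-norms. First I would write $\M_1(x)=\kb{u_x}{u_x}$ and $\M_2(y)=\kb{v_y}{v_y}$ for orthonormal bases $\{u_x\}$ and $\{v_y\}$, and compute the operator norm entering \eqref{eq:useful}. Since $\M_1(x)+\M_2(y)$ is a sum of two rank-one projections, it is supported on the (at most two-dimensional) span of $u_x$ and $v_y$; diagonalizing on that subspace gives the two eigenvalues $1\pm\mo{\ip{u_x}{v_y}}$, so
\[
\no{\M_1(x)+\M_2(y)} = 1 + \mo{\ip{u_x}{v_y}} \, .
\]
Recalling from \eqref{eq:pqracopt} that $\Pqracopt(\M_1,\M_2)=\tfrac{1}{2d^2}\sum_{x,y}\no{\M_1(x)+\M_2(y)}$ and that $\Prac^{2,d}=\tfrac{d(d+1)}{2d^2}$, the asserted bound $\Pqracopt(\M_1,\M_2)\geq\Prac^{2,d}$ is equivalent to $\sum_{x,y}\no{\M_1(x)+\M_2(y)}\geq d(d+1)$. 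Substituting the norm formula, the left-hand side equals $d^2+\sum_{x,y}\mo{\ip{u_x}{v_y}}$, so the whole proposition collapses to the single inequality
\[
\sum_{x,y}\mo{\ip{u_x}{v_y}} \geq d \, ,
\]
with overall equality matching the equality case of the statement.

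Next I would introduce the matrix $U$ with entries $U_{xy}=\ip{u_x}{v_y}$. Being the change-of-basis matrix between two orthonormal bases of $\complex^d$, it is unitary; in particular each of its columns is a unit vector. The key observation is that for a unit vector the $\ell^1$-norm dominates the $\ell^2$-norm, so for every fixed $y$ one has $\sum_x\mo{U_{xy}}\geq\big(\sum_x\mo{U_{xy}}^2\big)^{1/2}=1$. Summing over the $d$ values of $y$ immediately yields $\sum_{x,y}\mo{U_{xy}}\geq d$, which is exactly the desired inequality.

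For the equality case I would use the sharpness of the same bound: equality $\sum_x\mo{U_{xy}}=1$ holds for a unit vector if and only if it has a single nonzero entry, necessarily of modulus one. Hence overall equality forces every column of $U$ to carry exactly one nonzero entry of modulus one, and unitarity (orthonormality of the columns) then forces these entries to sit in distinct rows. Thus $U$ is a monomial matrix, meaning $v_y=\e^{\i\alpha_y}u_{\sigma(y)}$ for some permutation $\sigma$ and phases $\alpha_y$, i.e. $\M_2(y)=\M_1(\sigma(y))$. By the characterization recalled just before the proposition, this is precisely compatibility of the two sharp measurements; and conversely any such relabeling gives equality directly from \eqref{eq:long}.

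The only step requiring genuine linear algebra is the norm computation $\no{\M_1(x)+\M_2(y)}=1+\mo{\ip{u_x}{v_y}}$, which is short once one restricts to the two-dimensional supporting subspace. I do not expect a serious obstacle: the crux of the argument is simply recognizing that, for sharp measurements, the useful-for-QRAC quantity is governed by the entrywise $\ell^1$-norm of a unitary matrix, for which the $\ell^1\!\geq\!\ell^2$ comparison and its equality condition settle both the inequality and the compatibility dichotomy at once.
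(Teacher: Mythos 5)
Your proposal is correct and takes essentially the same route as the paper: the identical norm computation $\no{\M_1(x)+\M_2(y)}=1+\mo{\ip{u_x}{v_y}}$ reduces the claim to $\sum_{x,y}\mo{U_{xy}}\geq d$ for the unitary change-of-basis matrix, and the same equality analysis forces the monomial (permutation-with-phases) structure, hence compatibility. The only cosmetic difference is the closing step: you invoke the column-wise $\ell^1\geq\ell^2$ comparison, while the paper uses the pointwise bound $\kappa\geq\kappa^2$ on $[0,1]$ together with $\sum_{x,y}\kappa(x,y)^2=d$ --- two equivalent one-line arguments resting on the same unitarity identity.
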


\begin{proof}
Fix two orthonormal bases $\{\phii_x\}_{x=1}^d$ and $\{\psi_y\}_{y=1}^d$ such that $\M_1(x) = \kb{\phii_x}{\phii_x}$ and $\M_2(y) = \kb{\psi_y}{\psi_y}$ for all $x,y=1,\ldots,d$. Moreover, let $\ip{\phii_x}{\psi_y} = \kappa(x,y) {\rm e}^{{\rm i}\theta(x,y)}$ with $0\leq \kappa(x,y) = \abs{\ip{\phii_x}{\psi_y}} \leq 1$ and $0\leq \theta(x,y)<2\pi$. 
The nonzero eigenvalues of the positive operator $\M_1(x)+\M_2(y)$ are $1\pm\kappa(x,y)$ with respective eigenvectors $\psi_y\pm{\rm e}^{{\rm i}\theta(x,y)}\phii_x$. Therefore, $\no{\M_1(x)+\M_2(y)} = 1+\kappa(x,y)$ and
\begin{align*}
& \sum_{x,y} \no{\M_1(x)+\M_2(y)} = d^2 + \sum_{x,y} \kappa(x,y) \\
&\qquad\qquad\qquad \geq d^2 + \sum_{x,y} \kappa(x,y)^2 = d^2+d\,.
\end{align*}
In the latter expression, all the equalities are attained if and only if $\kappa(x,y)\in\{0,1\}$ for all $x,y$, which amounts to the two bases $\{\phii_x\}_{x=1}^d$ and $\{\psi_y\}_{y=1}^d$ being equal up to permutation and multiplication by phase factors. The claim then follows by comparing with \eqref{eq:useful}.
\end{proof}

%%%%%%%%%%%%%%%%%%%%%%
\section{Dichotomic qubit measurements}
%%%%%%%%%%%%%%%%%%%%%%

A dichotomic qubit measurement is parametrized by a vector $(\alpha,\va)\in\real^4$ as follows:
\begin{align*}
\M_{\alpha,\va}(\pm 1) =\left[(1\pm\alpha) \id \pm \va \cdot \vsigma\right] / 2 \, .
\end{align*}
The parameters must satisfy $\no{\va}\leq 1$ and $\mo{\alpha}\leq 1-\no{\va}$ in order for $\M_{\alpha,\va}$ to be a valid POVM.
A measurement $\M_{\alpha,\va}$ is called \emph{unbiased} if $\alpha=0$; otherwise \emph{biased}.

For two qubit measurements $\M_{\alpha,\va}$ and $\M_{\beta,\vb}$ we obtain
\begin{equation}
\big\|\M_{\alpha,\va}(x) + \M_{\beta,\vb}(y)\big\| = 1 + \half \big\|x\va + y\vb\big\| + \half (\alpha x + \beta y)
\end{equation}
for all $x,y = \pm 1$, and hence
\begin{align}\label{eq:qrac-qubit}
\Pqracopt(\M_{\alpha,\va},\M_{\beta,\vb})=\frac{1}{2} + \frac{1}{8} \left[ \big\|\va+\vb\big\| + \big\|\va-\vb\big\| \right] \, .
\end{align}
Interestingly, the bias parameters $\alpha$ and $\beta$ do not enter into this expression.

As proven in \cite{Busch86}, two unbiased qubit measurements $\M_{0,\va}$ and $\M_{0,\vb}$ are incompatible if and only if
\begin{align}\label{eq:paul}
\big\|\va+\vb\big\| + \big\|\va-\vb\big\| > 2 \, .
\end{align}
A comparison of this inequality to \eqref{eq:useful} and \eqref{eq:qrac-qubit} leads to the following conclusion.

\begin{proposition}\label{prop:unbiased}
Two unbiased qubit measurements $\M_1$ and $\M_2$ are incompatible if and only if they are useful for $(2,2)$-QRAC.
\end{proposition}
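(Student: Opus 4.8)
The plan is to reduce the statement to a direct comparison of two facts already established above: the closed-form expression \eqref{eq:qrac-qubit} for the optimized success probability, and the Busch incompatibility criterion \eqref{eq:paul}. Writing the two unbiased measurements as $\M_1=\M_{0,\va}$ and $\M_2=\M_{0,\vb}$, I would first unfold the definition of usefulness for $(2,2)$-QRAC, namely $\Pqracopt(\M_1,\M_2)>\Prac^{2,2}$, and evaluate the classical bound at $d=2$ to get $\Prac^{2,2}=\half(1+\tfrac12)=\tfrac34$.

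Next I would substitute \eqref{eq:qrac-qubit} into this threshold inequality. The pair is useful exactly when
\begin{align*}
\half+\tfrac18\left[\no{\va+\vb}+\no{\va-\vb}\right]>\tfrac34 \, ,
\end{align*}
and a one-line rearrangement turns this into $\no{\va+\vb}+\no{\va-\vb}>2$. This is precisely inequality \eqref{eq:paul}, which characterizes incompatibility of the unbiased pair $\M_{0,\va}$, $\M_{0,\vb}$; hence usefulness and incompatibility are equivalent, as claimed.

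I do not expect a genuine obstacle here, since both \eqref{eq:qrac-qubit} and \eqref{eq:paul} are exact identities rather than one-sided bounds: the argument is an algebraic identification rather than an estimate, and the only point demanding attention is that the numerical constants $\tfrac18$ and $\tfrac34$ combine to leave exactly the threshold $2$ of \eqref{eq:paul}, with no slack. It is worth stressing where the unbiasedness hypothesis enters: the formula \eqref{eq:qrac-qubit} holds verbatim for biased measurements as well (the bias parameters $\alpha,\beta$ having been seen to drop out), so unbiasedness is used solely to invoke the simple form \eqref{eq:paul} of the incompatibility criterion.
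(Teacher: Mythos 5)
Your proof is correct and is exactly the argument the paper itself uses: the paper derives Prop.~\ref{prop:unbiased} by comparing \eqref{eq:qrac-qubit} with the usefulness condition \eqref{eq:useful} and Busch's criterion \eqref{eq:paul}, which is precisely your substitution showing that $\Pqracopt(\M_{0,\va},\M_{0,\vb})>\tfrac34$ rearranges to $\no{\va+\vb}+\no{\va-\vb}>2$. Your added remark that unbiasedness is needed only to invoke \eqref{eq:paul}, since \eqref{eq:qrac-qubit} already holds for biased pairs, is a correct and worthwhile clarification consistent with the paper's subsequent biased counterexample.
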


Two biased qubit observables can be incompatible but have $\Pqracopt(\M_1,\M_2) < \Prac^{2,2}$.
For instance, let us choose $\alpha=0,\beta=0.2$ and $\va=(0.7,0,0),\vb=(0,0.7,0)$. 
As proven in \cite{StReHe08,BuSc10,YuLiLiOh10}, the corresponding measurements are incompatible, but from \eqref{eq:qrac-qubit} we see that they are not useful for (2,2)-QRAC.

%%%%%%%%%%%%%%%%%%%%%%
\section{Mutually unbiased measurements with noise}
%%%%%%%%%%%%%%%%%%%%%%

We have previously learned that incompatibility is only a necessary condition for two measurements to be useful for QRAC, the sufficiency requiring some additional assumptions.
One could hope that a result analogous to Prop.~\ref{prop:unbiased} is valid also in higher dimensions. 
To study this question, we look noisy versions of two mutually unbiased (MU) measurements in dimension $d$. 
In $d=2$ these would correspond to $\M_{0,\va}$ and $\M_{0,\vb}$ with $\va\cdot\vb=0$.
It is known that the incompatibility of two MU measurements is more resilient to uniform noise when the dimension $d$ increases \cite{UoLuMoHe16,DeSkFrBr19,CaHeTo19}.
In the following we demonstrate that for $(2,d)$-QRAC the trend is opposite, i.e., in a higher dimension two MU measurements tolerate less noise for being useful for QRAC.

Fix $d\geq 2$ and let $\{\varphi_x\}_{x=1}^d$ and $\{\psi_y\}_{y=1}^d$ be mutually unbiased bases, i.e., $\mo{\ip{\varphi_x}{\psi_y}}=1/\sqrt{d}$ for all $x,y$.
We are considering noisy $d$-outcome measurements defined as
\begin{align}
\Q_\mu(x) & = \mu\,\kb{\varphi_x}{\varphi_x} + (1-\mu)\,\id/d \\
\P_\nu(y) & = \nu\,\kb{\psi_y}{\psi_y} + (1-\nu)\,\id/d
\end{align}
with $\mu,\nu\in[0,1]$ being the noise parameters.  
It follows from \cite[Supplementary Material]{CaHeTo19} that for any $x,y=1,\ldots,d$, we have
\begin{equation}
\begin{aligned}
\no{\Q_\mu(x) + \P_\nu(y)} = & \,\frac{2}{d} + \frac{d-2}{2d}(\mu+\nu) \\
& + \frac{1}{2}\sqrt{\mu^2+\nu^2-\frac{2(d-2)}{d}\mu\nu} \,.
\end{aligned}
\end{equation}
Therefore, $\Q_\mu$ and $\P_\nu$ are useful for $(2,d)$-QRAC if and only if
\begin{equation}\label{eq:f-ineq}
f_d(\mu,\nu) > d-1 \, ,
\end{equation}
where 
\begin{equation}\label{eq:f}
f_d(\mu,\nu)= (d-2)(\mu+\nu-\mu\nu) + \mu^2+\nu^2 \, .
\end{equation}
Assuming that $\mu+\nu > 1$ (the bound for trivial compatibility \cite{BuHeScSt13}), in \cite{CaHeTo19} the necessary and sufficient condition for  $\Q_\mu$ and $\P_\nu$ to be incompatible has been shown to be
\begin{equation}\label{eq:g-ineq}
g_d(\mu,\nu) < d-1 \, ,
\end{equation}
with
\begin{equation}\label{eq:g}
g_d(\mu,\nu) = 2(d-2)(\mu+\nu-\mu\nu)-d(\mu^2+\nu^2) +3 \, .
\end{equation}
One can readily verify that the conditions \eqref{eq:f-ineq} and \eqref{eq:g-ineq} are equivalent for $d=2$, which is consistent with Prop.~\ref{prop:unbiased}.
For all $d\geq 3$ the first condition is strictly tighter than the second one (see Appendix \ref{app:A} for details).
This means that a pair of noisy MU measurements can be incompatible without being useful for QRAC.
In fact, when the dimension $d$ increases, the region of points $(\mu,\nu)$ that correspond to incompatible pairs becomes larger whereas the region where they correspond to useful pairs for QRAC becomes smaller; see Fig. \ref{fig:regions}.
In the limit $d\to\infty$, the incomptibility condition becomes $\mu+\nu > 1$ while the QRAC condition becomes $\mu=\nu=1$. 

\begin{figure}
    \centering
    \subfigure[]
    {
        \includegraphics[width=3.3cm]{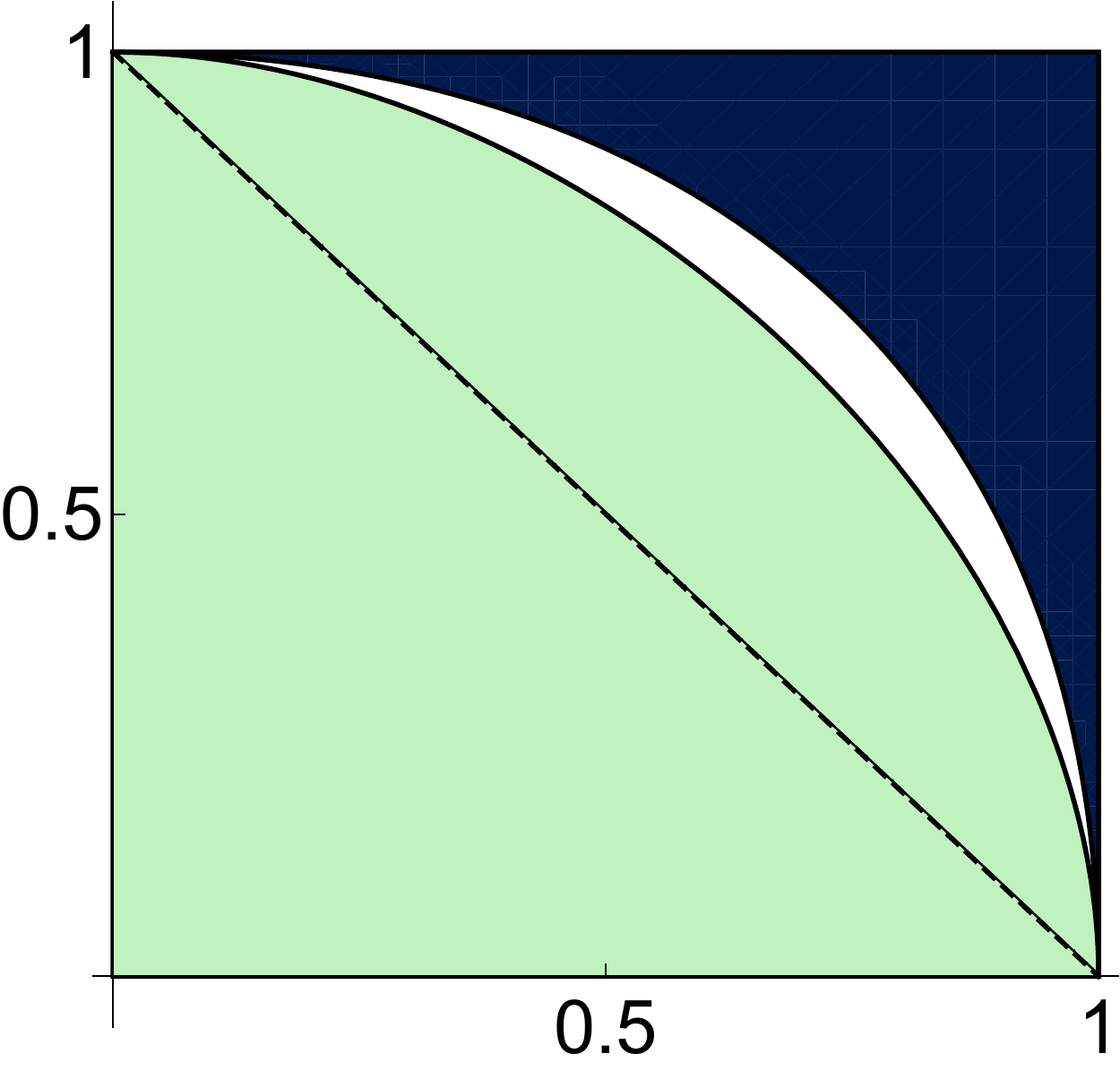} 
    }
    \subfigure[]
    {
        \includegraphics[width=3.3cm]{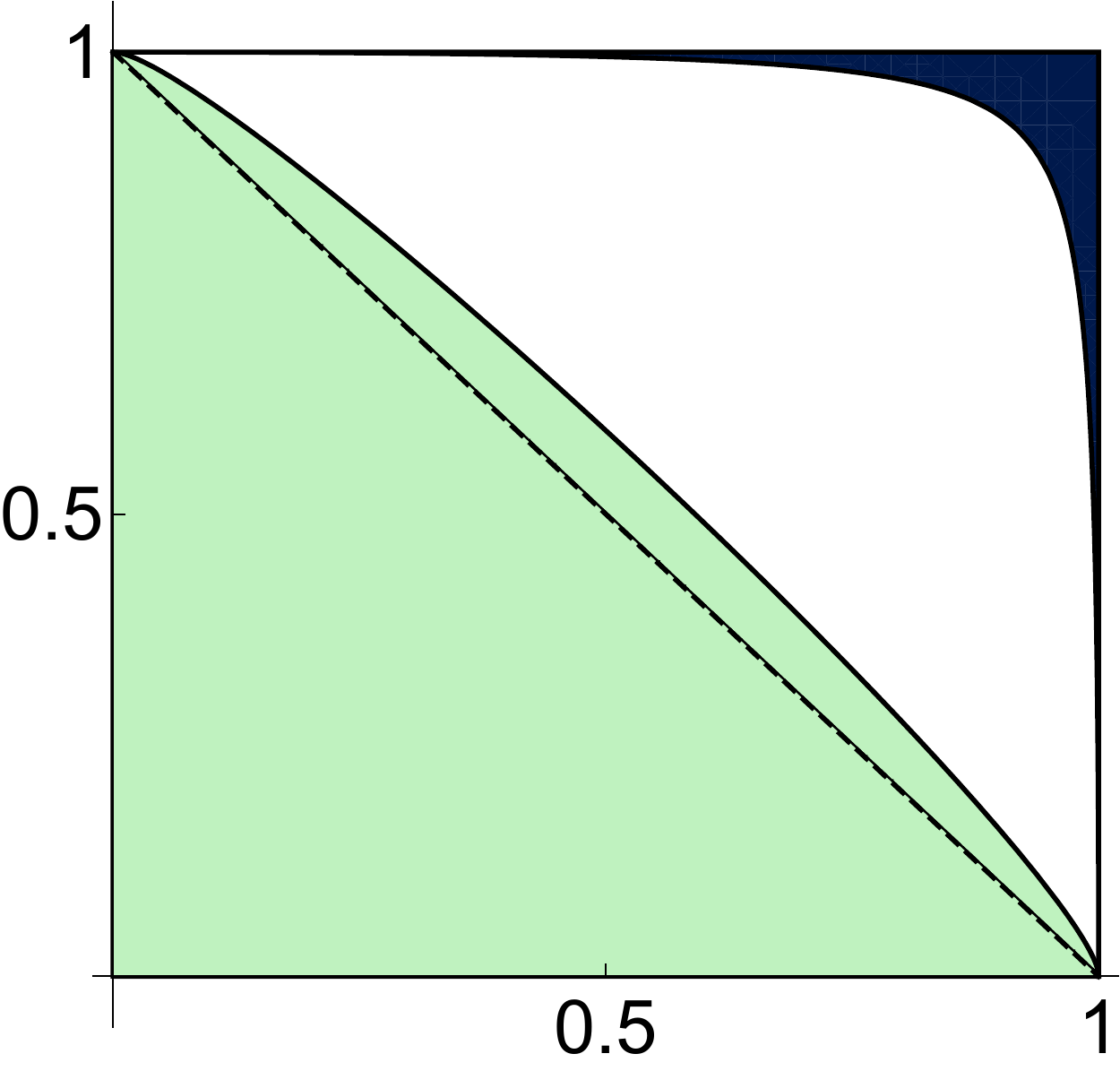}   
    }
        \caption{The regions of points $(\mu,\nu)$ that correspond to compatible pairs (light green) and pairs that are useful for $(2,d)$-QRAC (dark blue). In the intermediate white region, pairs are incompatible but useless for $(2,d)$-QRAC. (a) $d=3$ and (b) $d=100$.  }
    \label{fig:regions}
\end{figure}

%%%%%%%%%%%%%%%%%%%%%%
\section{Detecting incompatibility}
%%%%%%%%%%%%%%%%%%%%%%

An obvious implication from Thm.~\ref{prop:compatible} is that if we run a QRAC test and get an average success probability over the classical bound, then the measurements involved are incompatible. 
For the purpose of incompatibility detection, we can generalize the QRAC setting and in this way allow a wider range of applicability.

Let $\en$ be a labeled set of $n\times m $ states of a $d$-dimensional quantum system, and let $\M_1$ and $\M_2$ be measurements with $n$ and $m$ outcomes, respectively.
The setting is otherwise the same as in QRAC; Alice sends a state $\en(i,j)$ to Bob with the probability $1/nm$, Bob receives an input $k\in\{1,2\}$, and then he must try to guess the encoded message $x_k$ by making a measurement $\M_k$ on $\en(i,j)$.
The average success probability is 
\begin{equation}\label{eq:average_mn}
\frac{1}{2nm} \sum_{x=1}^n \sum_{y=1}^m\tr{\en(x,y) (\M_1 (x) + \M_2(y))} \, .
\end{equation}
We denote by $\bar{P}(\M_1,\M_2)$ the average success probability for fixed  measurements $\M_1$ and $\M_2$ when the encoding states are optimized.
Hence,
\begin{align}\label{eq:pqracopt_mn}
\bar{P}(\M_1,\M_2)= \frac{1}{2nm} \sum_{x=1}^n \sum_{y=1}^m \no{\M_1 (x) + \M_2(y)} \, .
\end{align}
The proof of the following proposition is essentially the same as that of Thm.~\ref{prop:compatible}.

\begin{proposition}\label{prop:compatible-nm}
If 
\begin{align}\label{eq:compatible-nm}
\bar{P}(\M_1,\M_2) >  \frac{1}{2} \left( 1 + \frac{d}{nm} \right) \, ,
\end{align}
then $\M_1$ and $\M_2$ are incompatible.
\end{proposition}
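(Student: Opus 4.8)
The plan is to prove the contrapositive: assuming $\M_1$ and $\M_2$ are compatible, I will derive the bound $\bar{P}(\M_1,\M_2) \leq \half(1 + d/nm)$, from which \eqref{eq:compatible-nm} forces incompatibility. This follows the template of Theorem~\ref{prop:compatible} almost verbatim, the only structural novelty being that $\M_1$ and $\M_2$ now carry different numbers of outcomes, so their joint measurement $\G$ lives on the rectangular product set $\{1,\dots,n\}\times\{1,\dots,m\}$ and has $nm$ outcomes rather than $d^2$.

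Starting from \eqref{eq:pqracopt_mn}, I would substitute $\M_1(x) = \sum_a \G(x,a)$ and $\M_2(y) = \sum_b \G(b,y)$ inside the operator norm, isolate the single diagonal block $\G(x,y)$, and apply the triangle inequality to split $\no{\M_1(x)+\M_2(y)}$ into $\no{\G(x,y)}$ plus $\no{\sum_{a\neq y}\G(x,a) + \sum_b \G(b,y)}$. For the first piece, summing over all $x,y$ and bounding the norm of each positive operator by its trace gives $\sum_{x,y}\no{\G(x,y)} \leq \tr{\sum_{x,y}\G(x,y)} = \tr{\id} = d$.

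For the second piece, the goal is to show each summand is at most $1$. The key observation is that $\sum_{a\neq y}\G(x,a) + \sum_b \G(b,y)$ is a sum of distinct blocks of the partition $\{\G(a,b)\}$, hence a positive operator dominated by $\sum_{a,b}\G(a,b) = \id$; its norm is therefore at most $1$, and summing over the $nm$ pairs $(x,y)$ yields a total of at most $nm$. Combining the two estimates gives $2nm\cdot\bar{P}(\M_1,\M_2) \leq d + nm$, which rearranges to the claimed inequality.

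The main obstacle — in fact the only point requiring care — is verifying that the blocks appearing in the second piece are genuinely disjoint now that $n\neq m$: the terms $\G(x,a)$ with $a\neq y$ and the terms $\G(b,y)$ over all $b$ share no common index, since matching $(x,a)=(b,y)$ would force $a=y$, contradicting $a\neq y$. This disjointness is exactly what guarantees domination by $\id$, and since it is the same combinatorial fact already used in Theorem~\ref{prop:compatible}, it carries over unchanged to the rectangular outcome setting.
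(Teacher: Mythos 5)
Your proof is correct and is exactly the argument the paper intends: the paper states that the proof of Prop.~\ref{prop:compatible-nm} is ``essentially the same as that of Thm.~\ref{prop:compatible}'', and your adaptation---replacing the $d^2$-outcome joint measurement by an $nm$-outcome one, bounding $\sum_{x,y}\no{\G(x,y)}\leq d$ via the trace, and bounding the remaining $nm$ norms by $1$ through the same block-disjointness observation---reproduces that proof verbatim in the rectangular setting. Your explicit check that $(x,a)=(b,y)$ is impossible when $a\neq y$ is the right point to verify, and it goes through unchanged.
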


This result gives a semi-device-independent method to detect incompatibility; if for some collection of states the average success probability \eqref{eq:average_mn} is greater than the bound given in \eqref{eq:compatible-nm}, then we can conclude that $\M_1$ and $\M_2$ are incompatible even if we wouldn't know the used states.

Since always $\bar{P}(\M_1,\M_2)\leq 1$, the lower bound in \eqref{eq:compatible-nm} is pointless for $nm\leq d$.
The lower bound is useless also when the product $nm$ is large compared to $d$.
Namely, any two measurements $\M_1$ and $\M_2$ on a $d$-dimensional system satisfy 
$\sum_x \no{\M(x)} \leq d$
and therefore
\begin{align*}
\sum_{x=1}^n \sum_{y=1}^m \no{\M_1 (x) + \M_2(y)} & \leq \sum_{x=1}^n \sum_{y=1}^m \tr{\M_1 (x) + \M_2(y)} \\
& = \tr{m\,\id + n\,\id} =  d(n+m) \, .
\end{align*}
This implies that
\begin{align}\label{eq:Pbound1}
\bar{P}(\M_1,\M_2) \leq \frac{d(n+m)}{2nm} 
\end{align}
regardless whether $\M_1$ and $\M_2$ are compatible or incompatible.
Hence \eqref{eq:compatible-nm} is a nontrivial condition only when 
$d(n+m) > d+nm$.
For instance, for $d=2$ the cases when \eqref{eq:compatible-nm} can hold are when $n,m\in\{2,3,4\}$ and $nm\leq 9$.
In Appendix \ref{app:B} we present some examples outside of the usual QRAC where the incompatibility of pair of measurements is detected by the presented condition \eqref{eq:compatible-nm}.

An important property of the quantity $\bar{P}(\M_1,\M_2)$ is that it scales linearly if $\M_1$ and $\M_2$ are mixed with equal amount of noise. Namely, let $p_1$ and $p_2$ be probability distributions with $n$ and $m$ outcomes, respectively. Mixtures $t\M_1+(1-t)p_1\id$ and $t\M_2+(1-t)p_2\id$ describe noisy versions of $\M_1$ and $\M_2$, and we get
\begin{equation}\label{eq:p1p2}
\begin{aligned}
& \bar{P}(t\M_1+(1-t)p_1\id,\,t\M_2+(1-t)p_2\id) = \\
& \qquad = \frac{1}{2nm} \sum_{x=1}^n \sum_{y=1}^m \|t(\M_1 (x) + \M_2(y))\\
& \qquad\quad +(1-t)(p_1(x)+p_2(y))\id\| \\
& \qquad = t\,\bar{P}(\M_1,\M_2)+ (1-t)\,\frac{n+m}{2nm}\, .
\end{aligned}
\end{equation}
We observe that the specific form of $p_1$ and $p_2$ does not appear in the last expression. This scaling property of $\bar{P}(\M_1,\M_2)$ implies that, whenever \eqref{eq:compatible-nm} is satisfied, the joint measurability degree \cite{HeScToZi14}
\begin{equation}\label{eq:meas_degree}
\begin{aligned}
& \mathsf{j}(\M_1,\M_2) = \max\{t\geq 0 \mid \{t\M_i + (1-t)p_i\id\}_{i=1,2} \\
& \qquad\text{are compatible for some probabilities $p_1,p_2$}\}
\end{aligned}
\end{equation}
obeys the upper bound
\begin{equation}\label{eq:meas_degree_bound}
\mathsf{j}(\M_1,\M_2) \leq \frac{d+nm-(n+m)}{2nm\bar{P}(\M_1,\M_2)-(n+m)} \,.
\end{equation}
Further, one can also prove that the incompatibility robustness \cite{Haapasalo15,UoBuGuPe15}
\begin{equation}\label{eq:robustness}
\begin{aligned}
& \ca{R}(\M_1,\M_2) = \min\{t\geq 0\mid\{(\M_i + t\N_i)/(1+t)\}_{i=1,2} \\
& \quad \text{are compatible for some measurements $\N_1,\N_2$}\}
\end{aligned}
\end{equation}
is lower bounded by
\begin{equation}\label{eq:robustness_bound}
\ca{R}(\M_1,\M_2) \geq \frac{2nm\bar{P}(\M_1,\M_2)}{d+nm} -1\,.
\end{equation}
The detailed proofs of \eqref{eq:meas_degree_bound} and \eqref{eq:robustness_bound} are given in Appendix \ref{app:C}.
A different way to bound these kind of robustness quantities has been presented in \cite{FaKa19}.

%%%%%%%%%%%%%%%%%%%%%%
\section{Discussion}
%%%%%%%%%%%%%%%%%%%%%%

We have studied $(2,d)$-QRAC and proven that for two measurements to go over the classical bound, they must be incompatible. 
We have further shown that for certain types of measurements, every incompatibility pair gives quantum advantage with a suitable encoding. 
Finally, we have presented a generalized setting of QRAC that can be used for incompatibility detection even when the number of outcomes does not coincide with $d$.

Recently, an incompatibility detection method based on state discrimination task has been presented and studied \cite{CaHeTo19,SkSuCa19,UoKrShYuGu19,BuChZh19}. 
In the approach of \cite{CaHeTo19}, such a task has been connected to the notion of incompatibility witness \cite{Jencova18,BlNe18}.
The difference of (generalized) QRAC tests to incompatibility witnesses is that the latter is a universal method, i.e., every incompatible pair of measurements gives an advantage over compatible measurements in some discrimination task. But the merit of QRAC is that unlike an incompatibility witness, it is a semi-device-independent test of incompatiblity. 
An outstanding open problem is if all incompatible pairs can be detected in some semi-device-independent way, possibly with some variation of QRAC tests.

\section*{Acknowledgements}
T.H. acknowledges financial support from the Academy of Finland via the Centre of Excellence program (Grant No. 312058) as well as Grant No. 287750.

\appendix

%%%%%%%%%%%%%%%%%%%%%%%%
\section{Noisy MU measurements}\label{app:A}
%%%%%%%%%%%%%%%%%%%%%%%%

In this appendix, we prove some properties of the functions $f_d$ and $g_d$ that were defined in \eqref{eq:f} and \eqref{eq:g}.

\begin{proposition}
If $d\geq 2$, then the inequality $f_d(\mu,\nu) > d-1$ implies that $g_d(\mu,\nu) < d-1$ and $\mu+\nu>1$ for all $\mu,\nu\in[0,1]$.
For $d=2$, also the converse implication holds, whereas for every $d\geq 3$ it is violated by some $\mu,\nu\in[0,1]$.
\end{proposition}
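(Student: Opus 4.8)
The plan is to reduce the whole statement to one linear identity between $f_d$ and $g_d$. Writing $s=\mu+\nu-\mu\nu$, I note that the combination $\mu+\nu-\mu\nu$ is the only cross-term in both \eqref{eq:f} and \eqref{eq:g}, so adding $d$ times \eqref{eq:f} to \eqref{eq:g} makes the two $\mu^2+\nu^2$ contributions cancel and recombines the $s$-coefficients as $d(d-2)+2(d-2)=d^2-4$. This yields the key identity
\begin{equation*}
d\,f_d(\mu,\nu)+g_d(\mu,\nu)=(d^2-4)\,s+3 .
\end{equation*}
Alongside it I would record two elementary facts for $\mu,\nu\in[0,1]$: the bound $0\le s=1-(1-\mu)(1-\nu)\le 1$, and the rewriting $f_d=(\mu+\nu)^2+(d-2)(\mu+\nu)-d\,\mu\nu$ obtained from $\mu^2+\nu^2=(\mu+\nu)^2-2\mu\nu$.

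For the forward implication I would handle the two conclusions separately. Solving the identity for $g_d$ and substituting the hypothesis $f_d>d-1$ together with $s\le 1$ and $d^2-4\ge 0$ (valid for $d\ge2$) gives
\begin{equation*}
g_d=(d^2-4)s+3-d\,f_d<(d^2-4)+3-d(d-1)=d-1 ,
\end{equation*}
the strictness coming from the $f_d$-term. For $\mu+\nu>1$ I would argue by contrapositive from the rewritten $f_d$: if $u:=\mu+\nu\le1$ then, since $\mu\nu\ge0$, one has $f_d\le u^2+(d-2)u$, and the map $u\mapsto u^2+(d-2)u$ is increasing on $[0,1]$ for $d\ge2$, so $f_d\le 1+(d-2)=d-1$. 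Hence $f_d>d-1$ forces $u>1$.

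The $d=2$ converse is immediate: there $d^2-4=0$, so the identity collapses to $2f_2+g_2=3$, whence $f_2>1\iff g_2<1$, i.e.\ the two conditions are equivalent (and $\mu+\nu>1$ is then automatic from the forward direction).

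The only genuinely constructive part, which I regard as the main obstacle, is exhibiting for each $d\ge3$ a point where incompatibility holds but QRAC-usefulness fails. I would look on the diagonal $\mu=\nu=t$ and set $\phi(t)=f_d(t,t)-(d-1)=(4-d)t^2+2(d-2)t-(d-1)$. A sign check gives $\phi(1/2)=-d/4<0$ and $\phi(1)=1>0$, so a root $t_\ast\in(1/2,1)$ exists. At $(\mu,\nu)=(t_\ast,t_\ast)$ one has $f_d=d-1$ (the QRAC condition fails), $\mu+\nu=2t_\ast>1$, and $s=1-(1-t_\ast)^2<1$; feeding $s<1$ back into the identity evaluated at $f_d=d-1$ yields $g_d<d-1$ for $d\ge3$. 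Thus this point is incompatible yet useless for QRAC, violating the converse. The delicate point is precisely to confirm that the relevant root lies in $(1/2,1)$, so that $\mu+\nu>1$ and $t_\ast<1$ (hence $s<1$) hold simultaneously; the sign evaluation of $\phi$ above is what I expect to settle it.
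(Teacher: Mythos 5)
Your proof is correct and follows essentially the same route as the paper: your identity $d\,f_d+g_d=(d^2-4)s+3$ with $s=\mu+\nu-\mu\nu\leq 1$ is exactly the substitution the paper performs in its forward direction, and your diagonal counterexample for $d\geq 3$ lands on the same curve the paper uses — your root $t_\ast$ of $f_d(t,t)=d-1$ is precisely the right endpoint $\frac{\sqrt{d}+1}{\sqrt{d}+2}$ of the paper's explicit interval, located nonconstructively via the intermediate value theorem (sign check $\phi(1/2)=-d/4<0$, $\phi(1)=1>0$) instead of by solving the quadratic. All steps check out, including the valid observation that $f_d=d-1$ already negates the strict inequality $f_d>d-1$, so the boundary point suffices to refute the converse.
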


\begin{proof}
The inequality $f_d(\mu,\nu) > d-1$ is equivalent to
\begin{equation}\label{eq:mu2+nu2}\tag{$\ast$}
\mu^2+\nu^2 > d-1+(d-2)(\mu\nu-\mu-\nu) \,.
\end{equation}
Inserting it into the definition \eqref{eq:g} of $g_d$, we obtain
$$
g_d(\mu,\nu) < (d^2-4)(\mu+\nu-\mu\nu) - d^2 + d  + 3 \,.
$$
Since
$$
\max_{\mu,\nu\in[0,1]} (\mu+\nu-\mu\nu) = 1 \,,
$$
it follows that $g_d(\mu,\nu) < d - 1$. Moreover, \eqref{eq:mu2+nu2} also implies
\begin{align*}
& \mu+\nu\geq \mu^2+\nu^2 > d-1+(d-2)(\mu\nu-\mu-\nu) \\
& \qquad \Rightarrow \quad \mu+\nu > 1 + \frac{d-2}{d-1}\,\mu\nu \geq 1 \,.
\end{align*}
For $d=2$, the conditions $f_d(\mu,\nu) > d-1$ and $g_d(\mu,\nu) < d-1$ are equivalent by direct inspection. If $d\geq 3$, then we have
$$
\begin{cases}
\mu+\nu > 1 \\
g_d(\mu,\nu) < d-1 \\
f_d(\mu,\nu) < d-1
\end{cases}
$$
for all $\mu=\nu$ in the open interval
$$
\left(\frac{\sqrt{d}+2}{2(\sqrt{d}+1)} \,,\, \frac{\sqrt{d}+1}{\sqrt{d}+2}\right) \,.
$$
\end{proof}

\begin{proposition}
If $\mu,\nu\in[0,1]$ and $d\geq 2$, then, for all $d'<d$,
\begin{itemize}
\item $f_d(\mu,\nu) > d-1$ implies $f_{d'}(\mu,\nu) > d'-1$;
\item $g_d(\mu,\nu) < d-1$ implies $g_{d'}(\mu,\nu) < d'-1$.
\end{itemize}
\end{proposition}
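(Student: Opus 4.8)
The plan is to reduce both bullets to the monotonicity in the single parameter $d$ of the quantities $f_d(\mu,\nu)-(d-1)$ and $g_d(\mu,\nu)-(d-1)$. First I would record the two elementary identities valid for all $\mu,\nu\in[0,1]$, namely $(1-\mu)(1-\nu)=1-(\mu+\nu-\mu\nu)\geq 0$ and $(\mu+\nu-1)^2=\mu^2+\nu^2-2(\mu+\nu-\mu\nu)+1\geq 0$. Substituting these into the definitions \eqref{eq:f} and \eqref{eq:g} and collecting the terms that multiply $d$, I would rewrite
\begin{align*}
f_d(\mu,\nu)-(d-1) &= (\mu+\nu-1)^2 - d\,(1-\mu)(1-\nu)\,, \\
g_d(\mu,\nu)-(d-1) &= 4\,(1-\mu)(1-\nu) - d\,(\mu+\nu-1)^2\,.
\end{align*}
Both right-hand sides are affine in $d$ with nonpositive slope, so each is non-increasing in $d$; this single observation drives the whole argument.

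For the first bullet I would use the resulting exact identity
\begin{equation*}
f_{d'}(\mu,\nu)-(d'-1) = \big[f_d(\mu,\nu)-(d-1)\big] + (d-d')\,(1-\mu)(1-\nu)\,.
\end{equation*}
For $d'<d$ the correction term is nonnegative, hence $f_{d'}(\mu,\nu)-(d'-1)\geq f_d(\mu,\nu)-(d-1)$, and the hypothesis $f_d(\mu,\nu)>d-1$ immediately forces $f_{d'}(\mu,\nu)>d'-1$. This is exactly the printed statement, and geometrically it says that the ``useful for QRAC'' region $\{f_d>d-1\}$ shrinks as $d$ grows, as drawn in Fig.~\ref{fig:regions}.

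For the second bullet the same computation produces the companion identity
\begin{equation*}
g_{d'}(\mu,\nu)-(d'-1) = \big[g_d(\mu,\nu)-(d-1)\big] + (d-d')\,(\mu+\nu-1)^2\,,
\end{equation*}
and here the main obstacle surfaces. For $d'<d$ the correction term $(d-d')(\mu+\nu-1)^2$ is again nonnegative, so the identity gives $g_{d'}(\mu,\nu)-(d'-1)\geq g_d(\mu,\nu)-(d-1)$; this propagates the incompatibility condition forward from $d'$ to larger dimensions, i.e.\ it proves $g_{d'}<d'-1\Rightarrow g_d<d-1$, and not the converse printed in the bullet. The printed orientation in fact cannot hold: choosing $\mu=\nu=0.7$ gives $g_3(\mu,\nu)=1.88<2$ but $g_2(\mu,\nu)=1.04>1$, so $g_d<d-1$ at $d=3$ fails to imply $g_{d'}<d'-1$ at $d'=2$.

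This is not an accident of the estimates but reflects the physics recalled in the main text: the incompatibility region $\{g_d<d-1\}$ \emph{grows} with $d$ (Fig.~\ref{fig:regions}), so incompatibility acquired at a smaller dimension $d'$ must persist to every larger $d$, which is precisely $g_{d'}<d'-1\Rightarrow g_d<d-1$. I would therefore present the second bullet with the roles of $d$ and $d'$ exchanged relative to the first; with that (corrected) reading it follows in one line from the non-increasing monotonicity of $g_d(\mu,\nu)-(d-1)$ established in the opening paragraph, in exact parallel with the first bullet.
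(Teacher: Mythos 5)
Your proof is correct, and it is essentially the paper's own argument in integrated form: the paper computes $\frac{\partial}{\partial d}\left[f_d(\mu,\nu)-(d-1)\right] = \mu+\nu-\mu\nu-1 = -(1-\mu)(1-\nu) \leq 0$ and $\frac{\partial}{\partial d}\left[g_d(\mu,\nu)-(d-1)\right] = -(\mu+\nu-1)^2 \leq 0$ and concludes that both maps $d\mapsto f_d(\mu,\nu)-(d-1)$ and $d\mapsto g_d(\mu,\nu)-(d-1)$ are nonincreasing; your exact affine identities $f_d-(d-1) = (\mu+\nu-1)^2 - d(1-\mu)(1-\nu)$ and $g_d-(d-1) = 4(1-\mu)(1-\nu) - d(\mu+\nu-1)^2$ are the antiderivative of precisely these computations (both check out), so the first bullet is handled identically in the two treatments.

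More importantly, you are right about the second bullet, and this is worth stating plainly: as printed, it is false, and your counterexample is valid. At $\mu=\nu=0.7$ one has $\mu+\nu-\mu\nu = 0.91$ and $\mu^2+\nu^2=0.98$, giving $g_3 = 1.82-2.94+3 = 1.88 < 2$ while $g_2 = -1.96+3 = 1.04 > 1$, so $g_d < d-1$ at $d=3$ does not propagate down to $d'=2$. Note that the paper's proof never actually derives the printed implication for $g_d$: it stops at the sign of the derivative, and nonincreasingness of $d\mapsto g_d(\mu,\nu)-(d-1)$ yields exactly the reversed implication $g_{d'}(\mu,\nu) < d'-1 \Rightarrow g_d(\mu,\nu) < d-1$ for $d'<d$, which is what you propose. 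The printed bullet is therefore a misstatement (the roles of $d$ and $d'$ should be exchanged, or the quantifier changed to $d'>d$), while the corrected reading is the one the main text actually uses: it is what makes the incompatibility region $\{g_d < d-1\}$ grow with $d$ and the QRAC-useful region $\{f_d > d-1\}$ shrink, exactly as depicted in Fig.~\ref{fig:regions}. So your proposal does not merely reprove the proposition; it repairs its statement in the only way consistent with both the paper's own proof and its surrounding discussion.
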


\begin{proof}
We have
$$
\frac{\partial}{\partial d}\left[f_d(\mu,\nu) - (d-1)\right] = \mu+\nu-\mu\nu-1 \leq 0
$$
for all $\mu,\nu\in[0,1]$, thus impliying that the function $d\mapsto f_d(\mu,\nu) - (d-1)$ is nonincreasing. Similarly,
\begin{align*}
\frac{\partial}{\partial d}\left[g_d(\mu,\nu) - (d-1)\right] & = 2(\mu+\nu-\mu\nu)-(\mu^2+\nu^2)-1 \\
& = -\left(\mu+\nu-1\right)^2 \leq 0 \,.
\end{align*}
\end{proof}

%%%%%%%%%%%%%%%%%%%%%%%%%%%%%%
\section{Examples of incompatibility detection}\label{app:B}
%%%%%%%%%%%%%%%%%%%%%%%%%%%%%%

In the following examples we demonstrate that Prop.~\ref{prop:compatible-nm} can be used to detect incompatibility in some cases when we are not in the usual $(2,d)$-QRAC setting with $d=n=m$.
Below, we denote $\M_k(\pm 1) =\half (\id \pm \sigma_k)$ for $k=1,2,3$.

\begin{example}($d=2$, $m=2$, $n=3$)
Let us consider the following three outcome qubit measurement $\N$:
\begin{align*}
\N(1) &=\tfrac{1}{3} (\id + \sigma_1) \\
\N(2) &=\tfrac{1}{3} (\id - \tfrac{1}{2} \sigma_1 + \tfrac{\sqrt{3}}{2} \sigma_2 ) \\
\N(3) &=\tfrac{1}{3} (\id - \tfrac{1}{2} \sigma_1 - \tfrac{\sqrt{3}}{2} \sigma_2 )
\end{align*}
From \eqref{eq:pqracopt_mn}, we obtain $\bar{P}(\M_{1},\N)\simeq 0.695$, $\bar{P}(\M_{2},\N)\simeq 0.696$ and $\bar{P}(\M_{3},\N)\simeq 0.717$.
The bound in the right hand side of \eqref{eq:compatible-nm} is $2/3$, hence the incompatibility of $\M_{k}$ and $\N$ is detected for each $k=1,2,3$.
\end{example}

\begin{example}($d=2$, $m=2$, $n=4$)
Let us consider the following symmetric informationally complete (SIC) qubit measurement $\N$:
\begin{align*}
\N(1) &=\tfrac{1}{4} (\id + \tfrac{1}{\sqrt{3}} (\sigma_1+\sigma_2+\sigma_3)) \\
\N(2) &=\tfrac{1}{4} (\id + \tfrac{1}{\sqrt{3}} (\sigma_1-\sigma_2-\sigma_3)) \\
\N(3) &=\tfrac{1}{4} (\id + \tfrac{1}{\sqrt{3}} (-\sigma_1+\sigma_2-\sigma_3)) \\
\N(4) &=\tfrac{1}{4} (\id + \tfrac{1}{\sqrt{3}} (-\sigma_1-\sigma_2+\sigma_3))
\end{align*}
We obtain $\bar{P}(\M_{k},\N)\simeq 0.6465$ for $k=1,2,3$.
The bound in the right hand side of \eqref{eq:compatible-nm} is $0.625$, hence the incompatibility of $\M_{k}$ and $\N$ is detected for each $k=1,2,3$.
\end{example}

\begin{example}($d=3$, $m=n=2$)
Let us consider two dichotomic qutrit measurements $\M$ and $\N$, defined by
$$
\M(1)=\begin{bmatrix}
1 & 0 & 0 \\
0 & 1 & 0 \\
0 & 0 & 0
\end{bmatrix}
\, , \quad 
\N(1)=\frac{1}{3} \begin{bmatrix}
1 & 1 & 1 \\
1 & 1 & 1 \\
1 & 1 & 1
\end{bmatrix}
$$
and $\M(2)=\id-\M(1)$, $\N(2)=\id-\N(1)$. 
Then $\bar{P}(\M,\N)\simeq 0.9013$ while the bound in the right hand side of \eqref{eq:compatible-nm} is $0.875$.
\end{example}

%%%%%%%%%%%%%%%%%%%%%%%%%%%%%%%%%%%%%%%%%%%%%%%%%%%%%
\section{QRAC bounds for the joint measurability degree and the incompatibility robustness}\label{app:C}
%%%%%%%%%%%%%%%%%%%%%%%%%%%%%%%%%%%%%%%%%%%%%%%%%%%%%

Here, we prove the bounds \eqref{eq:meas_degree_bound}, \eqref{eq:robustness_bound} for the incompatibility robustness and the joint measurability degree defined in \eqref{eq:meas_degree}, \eqref{eq:robustness}.

Suppose $0\leq t\leq\mathsf{j}(\M_1,\M_2)$. Then there exists probabilities $p_1$ and $p_2$ such that the measurements $\{t\M_i+(1-t)p_i\id\}_{i=1,2}$ are compatible. Hence, by Prop.~\ref{prop:compatible-nm},
\begin{equation}
\bar{P}(t\M_1+(1-t)p_1\id,\,t\M_2+(1-t)p_2\id) \leq \frac{1}{2}\left(1+\frac{d}{nm}\right) \,.
\end{equation}
Combining this inequality with \eqref{eq:p1p2}, we obtain
\begin{equation}
\begin{aligned}
& t\left[\bar{P}(\M_1,\M_2) - \frac{1}{2}\left(1+\frac{d}{nm}\right)\right] \leq \\
& \qquad\qquad \leq (1-t) \left[\frac{1}{2}\left(1+\frac{d}{nm}\right) - \frac{n+m}{2nm}\right] \,.
\end{aligned}
\end{equation}
Whenever \eqref{eq:compatible-nm} is satisfied, the differences in both square parentheses are positive. Solving with respect to $t$ thus yields
\begin{equation}
t\leq \frac{d+nm-(m+n)}{2nm\bar{P}(\M_1,\M_2) - (n+m)} \,,
\end{equation}
which implies the bound \eqref{eq:meas_degree_bound}.

Suppose $t\geq\ca{R}(\M_1,\M_2)$ and let $s=1/(1+t)$. Then there exists measurements $\N_1$ and $\N_2$ such that, setting $\M_i' = s\M_i+(1-s)\N_i$, the convex combinations $\M_1'$ and $\M_2'$ are compatible. As before, Prop.~\ref{prop:compatible-nm} thus implies
\begin{equation}
\bar{P}(\M_1',\M_2') \leq \frac{1}{2}\left(1+\frac{d}{nm}\right) \,.
\end{equation}
If $\en$ is any encoding map optimizing the average success probability \eqref{eq:average_mn}, then
\begin{equation}
\begin{aligned}
& s\bar{P}(\M_1,\M_2) = \\
& \qquad = \frac{1}{2nm} \sum_{x=1}^n \sum_{y=1}^m\tr{\en(x,y) (s\M_1(x) + s\M_2(y))} \\
& \qquad \leq \frac{1}{2nm} \sum_{x=1}^n \sum_{y=1}^m\tr{\en(x,y) (\M_1' (x) + \M_2'(y))} \\
& \qquad \leq \bar{P}(\M_1',\M_2') \,.
\end{aligned}
\end{equation}
Therefore,
\begin{equation}
s\leq\frac{d+nm}{2nm\bar{P}(\M_1,\M_2)} \,,
\end{equation}
or, equivalently,
\begin{equation}
t\geq\frac{2nm\bar{P}(\M_1,\M_2)}{d+nm} - 1 \,,
\end{equation}
which implies the bound \eqref{eq:robustness_bound}.

\end{document}